\newcommand{\rar}{\rightarrow}
\newcommand{\bs}[1]{\boldsymbol{#1}}
\newcommand{\sign}{\operatorname{sign}}
\definecolor{deepgreen}{cmyk}{1,0,1,0.5}
\newcommand{\la}{\lambda}
\newcommand{\p}{\partial}
\newcommand{\Rmnum}[1]{\expandafter\@slowromancap\romannumeral #1@}
\newcommand{\Del}[1]{}
\numberwithin{equation}{section}
\newtheorem{thm}{Theorem}[section]
\newtheorem{lem}[thm]{Lemma}
\newtheorem{prop}[thm]{Proposition}
\theoremstyle{remark}
\definecolor{green}{rgb}{0,0.8,0} 
\newcommand{\eps}{\epsilon}
\newcommand{\bbE}{\mathbb E}
\newcommand{\bbN}{\mathbb N}
\newcommand{\bbR}{\mathbb R}
\begin{document}

\title{On Stretch-Limited Elastic Strings}

\author{Casey Rodriguez}

\begin{abstract}
 Motivated by the increased interest in modeling nondissipative materials by constitutive relations more general than those from Cauchy elasticity, we initiate the study of a class of \emph{stretch-limited elastic strings}: the string cannot be compressed smaller than a certain length less than its natural length nor elongated larger than a certain length greater than its natural length. In particular, we consider equilibrium states for a string suspended between two points under the force of gravity (catenaries).  We study the locations of the supports resulting in tensile states containing both extensible and inextensible segments in two situations: the degenerate case when the string is vertical and the nondegenerate case when the supports are at the same height. We then study the existence and multiplicity of equilibrium states in general with multiplicity differing markedly from strings satisfying classical constitutive relations.    
\end{abstract}

\maketitle

\section{Introduction}
In a series of intriguing papers \cite{Raj_Implicit03, RajConspectus, RajElastElast, RajSmallStrain}, Rajagopal argued that elastic bodies should be defined as bodies incapable of dissipation and gave a wide class of implicit constitutive relations consistent with this definition. This class is larger than the constitutive relations found in standard Cauchy elasticity where stress is a function of strain.  In particular, it includes the simple case of the strain expressed as a function of the stress, which appears more consistent with Newtonian causality than standard constitutive relations since force causes displacement. Implicit constitutive relations have found a wide range of applications in the modeling of electro and magneto-elastic bodies \cite{BustRaj_Elasto13, BustRaj_Magneto15}, fracture in brittle materials \cite{GouMallRajWalton_PlaneCrack15, BulMalekRajWal_Existence15}, gum metal \cite{Raj_GumMetal14} and many other materials (see \cite{RajBust20} for further references).  

However, we are unaware of implicit constitutive relations being used in the theory of perfectly flexible one-dimensional elastic bodies moving in three-dimensional space: strings. The configuration at time $t$ of a finite-length elastic string is given by a nondegenerate curve in three-dimensional space, $[0,1] \ni s \mapsto \bs r(s,t)$, satisfying Newton's law:
\begin{align}
(\rho A)(s) \bs r_{tt}(s,t) = \bs n_s(s,t) + \bs f(s,t),
\end{align}
where $(\rho A)$ is the mass density, $\bs f$ is the body force density, $\bs n$ is the contact force given by 
\begin{align}
\bs n(s,t) = N(s,t) \frac{\bs r_s(s,t)}{|\bs r_s(s,t)|},
\end{align}
and $N$ is the tension. 
Similar to standard treatments of three-dimensional elastic bodies, in classical treatments of elastic strings, the system of equations are closed by expressing the tension $N$ as a function of the the stretch $\nu := |\bs r_s|$ via  
\begin{align}
N(s,t) = \hat N(\nu(s,t),s).
\end{align} 
Classically, it is assumed that the tension increases if the stretch increases (and thus, the relation between tension and stretch is invertible) with infinite compressive force associated to zero stretch and infinite tensile force associated to infinite stretch (see for example \cite{Antman79,AntmanBook} for more details). In particular, these relations allow the possibility of compressing a string to an arbitrarily small length or elongating a string to an arbitrarily large length.  In this work, we initiate the study of a new class of stretch-limited elastic strings. For these elastic strings, the stretch is expressed as a nondecreasing function of the tension,
\begin{align}
\nu(s,t) = \hat \nu(N(s,t),s), \quad \hat \nu(0,s) = 1, 
\end{align}
 but not vice versa, and the string cannot be compressed smaller than a certain length nor elongated larger than a certain length regardless of the magnitude of the tension: there exist $0 < \nu_0 < 1 < \nu_1$ such that for all $N \in \bbR, s \in [0,1]$,
 \begin{align}
 \hat \nu(N,s) \in [\nu_0,\nu_1].
 \end{align}
See Section 2 for the precise formulation and assumptions.

We consider a simple setting for stationary stretch-limited strings: a string suspended between two supports under the force of gravity (catenaries). The study of inextensible and extensible catenaries has a rich history going back to Galileo with contributions by the Bernoulli's, Huygens, Leibniz and many others (see \cite{Antman79} for a brief history). In this work, 
we first explicitly classify the support positions for which tensile catenaries ($N > 0$) contain inextensible segments (where the stretch is maximized) in two situations: the degenerate case when the string is vertical and the nondegenerate case when the supports are at the same height (see Proposition \ref{p:elong_threshold} for the degenerate case, and Proposition \ref{p:inext_cond} and Proposition \ref{p:ext_inext_condition} for the nondegenerate case). We then turn to the study of existence and multiplicity of stationary states for given points of support. It is simple to show that tensile states exist and are unique as in the case of standard constitutive relations considered in \cite{Antman79}. In the same work, for standard constitutive relations, Antman also proved that given two points of support less than one unit of reference length apart, there exist multiple compressed states ($N < 0$) as long as the mass density is small. In contrast, for stretch-limited strings we prove that if the distance between supports is less than the minimal length of the string then there exists a unique compressed state, as long as the mass density is small (see Proposition \ref{p:uniqueness}).  We comment that although compressed states are unstable within the theory of elastic strings, they may play a more prominent role in a theory of rods with implicit constitutive relations (a topic we intend to pursue) where their concave graphs model moment-free arches. 

\section{Stretch-Limited Strings}

\subsection{Kinematics, equations of motion, and constitutive assumptions}

Our general formulation of elastic strings follows the standard treatment due to Antman \cite{Antman79,AntmanBook}. Let $\{ \bs i, \bs j, \bs k \}$ be a fixed right-handed orthonormal basis for Euclidean space $\bbE^3$. Let $s\in  I$, where $I$ is the interval $[0,1], [0,\infty),$ or $(-\infty,\infty)$, which parameterizes the \emph{material points} of the string. The \emph{configuration} of the string at time $t$ is the map $s \mapsto \bs r(s,t)$, and the tangent vector to the curve $\bs r(\cdot, t)$ at $s$ is $\bs r_s(s,t) = \p_s \bs r(s,t)$. We recall that the \emph{stretch} $\nu(s,t)$ of the string at $(s,t)$ is 
\begin{align*}
\nu(s,t) := |\bs r_s(s,t)|.
\end{align*}
We require that the stretch is always positive throughout the motion of the string and say the string is \emph{elongated} where $\nu(s,t) > 1$ and \emph{compressed} where $\nu(s,t) < 1$. 

As a result of balance of linear momentum, we have the \emph{classical equations of motion} for a string: 
\begin{align}
(\rho A)(s) \bs r_{tt}(s,t) = \bs n_s(s,t) + \bs f(s,t), \quad (s,t) \in I \times (0,\infty). \label{eq:motion}
\end{align}
Here $(\rho A)(s)$ is the mass per unit reference length at $s$, $\bs f(s,t)$ is the body force per unit reference length at $(s,t)$, and $\bs n(s,t)$ is the contact force at $(s,t)$. 

A defining characteristic of a string is that the contact force is assumed to be tangent to the configuration of the string. Thus, there exists a scalar-valued function $N(s,t)$, the \emph{tension}, such that 
\begin{align*}
\bs n(s,t) = N(s,t) \frac{\bs r_s(s,t)}{|\bs r_s(s,t)|}
\end{align*}  
The mechanical properties of a string are modeled by specifying a relation between the stretch $\nu$ and tension $N$. In standard treatments of mechanical strings,  a string is said to be \emph{elastic} if there exists a function $\hat N(\nu,s)$ such that 
\begin{align}
N(s,t) = \hat N(\nu(s,t), s) \label{eq:constN}
\end{align}
(see \cite{Antman79, AntmanBook}). Inserting this relation into the equations of motion \eqref{eq:motion} results in a closed system of partial differential equations for the variable $\bs r(s,t)$ (which are hyperbolic if $\hat N_\nu \geq c > 0$ and the tension is positive, $N > 0$). A string is \emph{inextensible} if $\nu = 1$ no matter the force applied. In this case, the contact force $\bs n(s,t)$ is determined by \eqref{eq:motion} and the condition $\nu = 1$ rather than a constitutive relation. 

 Physically reasonable assumptions imposed on $\hat N$ are that an unstretched configuration is not in a state of tension, an increase in tension leads to an increase in stretch, a state of zero stretch requires infinite compression force and a state of infinite stretch requires infinite tensile force.  Mathematically, these assumptions are expressed by assuming $\hat N(1,s) = 0$, $\nu \mapsto \hat N(\nu,s)$ is increasing, $\lim_{\nu \rar 0} \hat N(\nu,s) = -\infty$ and $\lim_{\nu \rightarrow \infty} \hat N(\nu,s) = \infty$ for all $s$. Thus, $\hat N(\cdot,s)$ has an inverse function $\hat \nu(\cdot, s)$, and the constitutive assumption takes the form 
\begin{align}
\nu(s,t) = \hat \nu(N(s,t),s), \label{eq:const}
\end{align}
where 
$N \mapsto \hat \nu(N,s) \mbox{ is increasing,}$ \\
\begin{align} 
\begin{split}
\lim_{N \rar -\infty} \hat \nu(N,s) &= 0, \\
\lim_{N \rar \infty} \hat \nu(N,s) &= \infty.  
\end{split}\label{eq:ext_const}
\end{align}

\subsection{A class of stretch-limiting constitutive relations}

Motivated by the intriguing papers by Rajagopal \cite{Raj_Implicit03, RajConspectus, RajElastElast, RajSmallStrain}, we consider a class of strings with constitutive relation expressed via \eqref{eq:const} which cannot be expressed via \eqref{eq:constN}. In particular, we assume that the string cannot be further elongated nor compressed once a threshold value of tensile force or compression force is reached. We refer to the modeled objects as \emph{stretch-limited strings}. Mathematically, we assume that 
\begin{itemize}
\item there exist $N_0 < 0 < N_1$ and $\nu_0 < 1 < \nu_1$ such that 
\begin{align*}
\hat \nu(N,s) = 
\begin{cases}
\nu_0 &\mbox{ if } N \leq N_0, \\
\nu_1 &\mbox{ if } N \geq N_1, 
\end{cases}
\end{align*}
\item the function $\hat \nu(\cdot, s) \in C^\infty([N_0,N_1]; [\nu_0,\nu_1])$, $\hat \nu(0,s) = 1$ and there exists $c > 0$ such that $\hat \nu_N(N,s) \geq c$ for all $(N,s) \in [N_0,N_1] \times I$. 
\end{itemize}
We note that the previous two assumptions imply that $\hat \nu(\cdot,s)$ is continuous and piece-wise smooth but not globally smooth. The second assumption is necessary for the equations of motion \eqref{eq:motion} to be hyperbolic in segments of the string not fully stretched. 

Due to our assumptions on the constitutive relation between stretch and tension, one cannot expect motions in which the string has both extensible and inextensible segments to be classical solutions to \eqref{eq:motion} across the interface. If the interface between the extensible segment of the string $(N \in [N_0, N_1])$ and the inextensible segment of the string $(N \in (-\infty, N_0] \cup [N_1,\infty))$ is given by a curve $(\sigma(t),t)$ in the $(s,t)$ plane, then along the interface, $N \in \{N_0,N_1\}$, a weak solution to \eqref{eq:motion} satisfies the well-known \emph{Rankine-Hugoniot jump conditions}: 
\begin{align}
\llbracket \bs n \rrbracket + (\rho A) \sigma' \llbracket \bs r_t \rrbracket = \bs 0. 
\end{align} 
where $\llbracket \bs y \rrbracket(\sigma(t),t) = \bs y(\sigma(t)^+,t) - \bs y(\sigma(t)^-,t)$ is the jump across the point $s = \sigma(t)$ at time $t$. In particular, for equilibrium states we must have $\bs n$ is continuous across the interface. 

\section{Stretch-Limited Vertical States} 

\subsection{Formulation and initial result}
We first consider the degenerate catenary problem for a straight vertical state: 
\begin{align}
&\bs n(s) =\bs n(0) + F(s) \bs k, \quad s \in [0,1], \\
&\bs r(s) = z(s) \bs k, \quad \bs r(0) = \bs 0, \quad \bs r(1) = b \bs k, \label{eq:stat_straight}
\end{align} 
where $b > 0$, $\bs n(s) = N(s)\frac{\bs r_s(s)}{|\bs r_s(s)|}$ is the contact force, the stretch $\nu(s) = |\bs r_s(s)|$ and tension $N(s)$ satisfy the constitutive relation discussed in Section 2.2,
and the magnitude of the total gravitational force on the material segment $[0,s]$ is 
\begin{align}
F(s) := \int_0^s g (\rho A)(\xi) d\xi. 
\end{align} 
We denote the total mass of the string by $m := \int_0^1 (\rho A)(s) ds$.

\begin{prop}
Let $b \in [\nu_0,\nu_1]$. Then there exists $N(0) \in \bbR$ such that 
\begin{align*}
N(s) &:= N(0) + F(s), \\ 
\bs n(s) &:= N(s) \bs k, \\
\bs r(s) &:= \int_0^s \hat \nu(N(\xi),\xi) d\xi \bs k,
\end{align*} for $s \in [0,1]$ solve \eqref{eq:stat_straight}. If $b \in (\nu_0, \nu_1)$ then $N(0)$ is unique (and thus, $\bs r$ is unique). If $b = \nu_0$ or $b = \nu_1$, then $\bs r$ is unique. 
\end{prop}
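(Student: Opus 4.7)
The plan is to reduce system \eqref{eq:stat_straight} to a single scalar equation in the unknown $N(0)$. With the Ansatz in the statement, the formulas for $N(s), \bs n(s), \bs r(s)$ satisfy $\bs n(s) = \bs n(0) + F(s)\bs k$ and $\bs r(0) = \bs 0$ by construction; only the boundary condition $\bs r(1) = b\bs k$ remains. Using $\nu = \hat\nu(N,s)$ this reads
\begin{equation}
G(N(0)) := \int_0^1 \hat\nu\bigl(N(0) + F(s), s\bigr)\, ds = b,
\end{equation}
so the proposition becomes an existence/uniqueness statement for the scalar equation $G(N(0)) = b$.

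I would first record basic properties of $G$. Continuity follows from dominated convergence and the uniform bound $\hat\nu \in [\nu_0,\nu_1]$; monotonicity (nondecreasing) follows from $\hat\nu(\cdot,s)$ being nondecreasing. Since $\hat\nu(N,s) = \nu_0$ for $N \leq N_0$ and $\hat\nu(N,s) = \nu_1$ for $N \geq N_1$, one has $G(N(0)) = \nu_0$ for every $N(0) \leq N_0 - gm$ and $G(N(0)) = \nu_1$ for every $N(0) \geq N_1$. In particular $G(\bbR) = [\nu_0, \nu_1]$, and the intermediate value theorem produces at least one $N(0)$ solving $G(N(0)) = b$ for any $b \in [\nu_0,\nu_1]$, giving the existence claim.

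For the uniqueness of $N(0)$ when $b \in (\nu_0,\nu_1)$, the plan is to show that $G$ is strictly increasing at every $N(0)$ for which $G(N(0))$ lies in the open interval $(\nu_0,\nu_1)$. Since $\hat\nu(\cdot,s)$ is Lipschitz in $N$, $G$ is absolutely continuous and a.e.
\begin{equation}
G'(N(0)) = \int_0^1 \hat\nu_N\bigl(N(0)+F(s), s\bigr)\, ds \;\geq\; c \cdot |E(N(0))|,
\end{equation}
where $E(N(0)) := \{s \in [0,1] : N(0)+F(s) \in (N_0, N_1)\}$ and $c > 0$ is the constant from the constitutive assumption. Taking $(\rho A) > 0$ so that $F$ is continuous and strictly increasing, the image of $[0,1]$ under $s \mapsto N(0)+F(s)$ is the interval $[N(0), N(0)+gm]$. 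If $E(N(0))$ is empty, this image avoids $(N_0,N_1)$; being connected, it must lie entirely in $(-\infty, N_0]$ or entirely in $[N_1,\infty)$, forcing $G(N(0)) \in \{\nu_0,\nu_1\}$. Hence $G(N(0)) \in (\nu_0,\nu_1)$ implies $E(N(0))$ is a nonempty open subinterval of positive measure, and thus $G'(N(0)) > 0$. Integration then excludes plateaus of $G$ at interior heights, giving uniqueness of $N(0)$.

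For the endpoint cases $b \in \{\nu_0,\nu_1\}$, the constitutive bound $\hat\nu \geq \nu_0$ and the requirement $\int_0^1 \hat\nu(N(\xi),\xi)\, d\xi = \nu_0$ force $\hat\nu(N(\xi),\xi) = \nu_0$ a.e., so $\bs r(s) = \nu_0 s\, \bs k$ is uniquely determined although $N(0)$ may be any element of $(-\infty, N_0 - gm]$; the $b = \nu_1$ case is symmetric. The main obstacle is the interior uniqueness step: one must combine the strict monotonicity of $F$ with the quantitative bound $\hat\nu_N \geq c$ in the elastic regime to rule out plateaus of $G$ precisely on the preimage $G^{-1}\bigl((\nu_0,\nu_1)\bigr)$ where those plateaus could a priori arise from the non-strict global monotonicity of $\hat\nu$.
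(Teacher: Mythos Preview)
Your proposal is correct and follows essentially the same route as the paper: both reduce \eqref{eq:stat_straight} to the scalar equation $\int_0^1 \hat\nu(N(0)+F(s),s)\,ds = b$, identify the endpoints where the integral equals $\nu_0$ and $\nu_1$, and invoke continuity/monotonicity together with the intermediate value theorem. You supply more detail than the paper does on the strict monotonicity of $G$ in the interior range and on the endpoint uniqueness of $\bs r$, but the underlying argument is the same.
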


\begin{proof}
We write $\bs r(s) = z(s) \bs k$ so that $\nu(s) = z'(s) > 0$ on $[0,1]$. 
	If $z(0) = 0$ and $z(1) = b$ then 
	\begin{align*}
	b = \int_0^1 z'(s) ds = \int_0^1 \hat \nu(N(s)) ds 
	\end{align*}
	so \eqref{eq:stat_straight} is equivalent to
	\begin{align}
	N(s) &= N(0) + F(s), \\
	b &= \int_0^1 \hat \nu(N(0) + F(s),s) ds \label{eq:L_equation}
	\end{align}
	Since $F(s)$ is increasing on $[0,1]$, $F(1) = gm$ and $\bs r$ is stretch-limited, we have 
	\begin{align*}
N(0) \leq N_0 - g m \iff \int_0^1 \hat \nu(N(0) + F(s),s) ds = \nu_0, \\
N(0) \geq N_1 \iff \int_0^1 \hat \nu(N(0) + F(s),s) ds = \nu_1. 
	\end{align*} 
	Since the function $N(0) \mapsto \int_0^1 \hat \nu(N(0) + F(s),s) ds$ is continuous and increasing on $[N_0-gm, N_1]$, the proposition follows from the intermediate value theorem. 
\end{proof}

\subsection{Threshold for elongated mixed extensible-inextensible state} 

We now assume that the string is uniform so that 
\begin{align}
(\rho A)(s) = \gamma >  0, \quad \nu(s) = \hat \nu (N(s)).
\end{align} Since $F(s) = g \gamma s$ is increasing, there exist equilibrium states  given by a union of an elongated extensible segment where $\nu \in (1,\nu_1)$ and an inextensible segment where $\nu = \nu_1$. Indeed, this occurs if and only if 
\begin{align}
N(0) >  0 \quad \mbox{and} \quad \frac{N_1 - N(0)}{g\gamma} \in (0,1). \label{eq:exin_condition}
\end{align} 
Indeed, \eqref{eq:exin_condition} is equivalent to
$0 < N(s) < N_1$ for all $s \in [0,s_1)$ and $N(s) \geq N_1$ for all $s \in [s_1,1]$ where 
\begin{align*}
s_1 := \frac{N_1 - N(0)}{g \gamma} \in (0,1).
\end{align*} 

In terms of the position of the support $b \bs k$, we pose the following question: 
\begin{itemize}
	\item  What values of $b$ result in an elongated state composed of an extensible segment and inextensible segment? 
\end{itemize}
We prove that the essentially sharp threshold for $b$ is 
$
b(\gamma) = \nu_1 - \frac{1}{2} g \gamma \hat \nu_{N^-}(N_1),
$
for all $\gamma$ sufficiently small. 

\begin{prop}\label{p:elong_threshold}
Let $\epsilon \in (0,1)$. Then for all $\gamma$ sufficiently small (depending on $\eps$ and $\hat \nu$), the following is true. If 
\begin{align}
\nu_1 - \Bigl (1-\eps \Bigr ) \frac{g\gamma}{2} \hat \nu_{N^-}(N_1) \leq b < \nu_1, \label{eq:L_ass}
\end{align}
then \eqref{eq:exin_condition} holds, i.e.
the string is a union of an elongated extensible segment and inextensible segment.

Conversely, if the string is a union of an elongated extensible segment and inextensible segment, then
\begin{align}
\nu_1 - (1 + \eps) \frac{g\gamma}{2} \hat \nu_{N^-}(N_1) \leq b < \nu_1, \label{eq:L_assc}
\end{align} 
\end{prop}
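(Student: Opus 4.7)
The plan is to use the preceding proposition to reduce everything to an analysis of the explicit map $N(0) \mapsto b$, and then perform a Taylor expansion of $\hat\nu$ at the corner point $N = N_1$.

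First, I would observe that Proposition 3.1 gives $b = \int_0^1 \hat\nu(N(0) + g\gamma s)\, ds$ for the unique solution, and that, for $\gamma$ small enough that $N_1 - g\gamma > 0$, condition \eqref{eq:exin_condition} is equivalent to $N(0) \in (N_1 - g\gamma, N_1)$. On this range, splitting the integral at $s_1 = (N_1 - N(0))/(g\gamma)$ and substituting $N = N(0) + g\gamma s$ on the extensible part yields
\begin{align*}
b \;=\; \nu_1 + \frac{1}{g\gamma}\int_{N_1 - \Delta}^{N_1} \bigl(\hat\nu(N) - \nu_1\bigr)\, dN, \qquad \Delta := N_1 - N(0) \in (0, g\gamma).
\end{align*}

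Second, using $\hat\nu \in C^\infty([N_0, N_1])$ with $\hat\nu(N_1) = \nu_1$, I would Taylor expand
\begin{align*}
\hat\nu(N) - \nu_1 \;=\; \hat\nu_{N^-}(N_1)\,(N - N_1) + O\bigl((N - N_1)^2\bigr)
\end{align*}
uniformly for $N \in [N_1 - g\gamma, N_1]$, and integrate to obtain
\begin{align*}
b(\Delta) \;=\; \nu_1 - \frac{\hat\nu_{N^-}(N_1)}{2\, g\gamma}\, \Delta^2 + O\bigl(\Delta^3/(g\gamma)\bigr).
\end{align*}
Since $b$ is continuous and strictly increasing in $N(0)$ (because $\hat\nu$ is nondecreasing and $\hat\nu_N \geq c > 0$ on $[N_0,N_1]$), the image of the exin regime $N(0) \in (N_1 - g\gamma, N_1)$ is precisely the open interval $(b_-,\nu_1)$, where
\begin{align*}
b_- \;:=\; \lim_{\Delta \to (g\gamma)^-} b(\Delta) \;=\; \nu_1 - \frac{g\gamma}{2}\, \hat\nu_{N^-}(N_1) + O\bigl((g\gamma)^2\bigr).
\end{align*}

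Third, both directions of the proposition follow from this description combined with the uniqueness of $N(0)$ from Proposition 3.1. For the sufficient direction, the hypothesis \eqref{eq:L_ass} together with $b < \nu_1$ places $b$ in the interval $\bigl[\nu_1 - (1-\eps)\tfrac{g\gamma}{2}\hat\nu_{N^-}(N_1),\, \nu_1\bigr)$, whose left endpoint exceeds $b_-$ once $\gamma$ is so small that the $O((g\gamma)^2)$ remainder is dominated by $\eps \cdot \tfrac{g\gamma}{2}\hat\nu_{N^-}(N_1)$; hence $b \in (b_-, \nu_1)$ and the associated $N(0)$ lies in the exin regime, so \eqref{eq:exin_condition} holds. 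Conversely, if \eqref{eq:exin_condition} holds then $b > b_-$, and the same smallness of $\gamma$ gives $b_- \geq \nu_1 - (1+\eps)\tfrac{g\gamma}{2}\hat\nu_{N^-}(N_1)$, which combined with $b < \nu_1$ establishes \eqref{eq:L_assc}.

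The main obstacle is the careful bookkeeping at the corner $N = N_1$: $\hat\nu$ is only piecewise smooth globally, so one must work with the left derivative $\hat\nu_{N^-}(N_1)$ and verify that the Taylor remainder is uniform as the interval $[N_1 - g\gamma, N_1]$ shrinks with $\gamma$. Everything else is the intermediate value theorem applied to the monotone map $N(0) \mapsto b$.
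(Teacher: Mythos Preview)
Your proposal is correct, and for the converse direction it is essentially identical to the paper's argument (a direct Taylor expansion of $b$ on the exin regime, yielding $b = \nu_1 - \tfrac{1}{2}g\gamma\,\hat\nu_{N^-}(N_1)\,s_0^2(1+O(\gamma))$ with $s_0 = \Delta/(g\gamma)$).

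For the sufficient direction, however, your route differs from the paper's. The paper argues by contradiction with sequences: assuming $(N_1 - N_n(0))/(g\gamma_n) \geq 1$ (so $N_n(s) \leq N_1$ for all $s$), it first uses a coarse expansion $b = \hat\nu(N_n(0)) + O(\gamma_n)$ to localize $N_n(0)$ to an $O(\gamma_n)$-neighborhood of $N_1$, and only then performs the refined expansion around $N_1$ to reach a contradiction. You instead exploit monotonicity of $N(0)\mapsto b$ together with uniqueness from Proposition~3.1 to identify the image of the exin interval $(N_1 - g\gamma, N_1)$ as exactly $(b_-,\nu_1)$, reducing both directions to a single comparison of $b_-$ with the stated thresholds. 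Your approach is more streamlined: it needs only one Taylor expansion and no sequences or preliminary localization, at the modest cost of invoking the uniqueness statement and verifying strict monotonicity on the exin regime. The paper's contradiction argument, on the other hand, is self-contained in that it does not appeal to the monotonicity of the full map $N(0)\mapsto b$ or to uniqueness.
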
 

\begin{proof}
Suppose that \eqref{eq:L_ass} holds. Since $b < \nu_1$ and $N(s) = N(0) + g\gamma s$, we must have $N(0) < N_1$. Moreover, since $b > \frac{\nu_1}{2}$ for all $\gamma$ sufficiently small, it follows that $N(0) > 0$. Otherwise, for all $s$, $N(s) \leq \gamma g$ which implies $b \leq \hat \nu(\gamma g) < \frac{\nu_1}{2}$ for all $\gamma$ sufficiently small. Thus, $N(0) \in (0,N_1)$

We now prove the conclusion of the first part of Proposition \ref{p:elong_threshold} by contradiction. Suppose that there exist $b_n > 0$, $\gamma_n >  0$ and $0 < N_n(0) < N_1$ satisfying \eqref{eq:L_ass}, $\gamma_n \rar 0$ and for all $n$, 
\begin{align}
\frac{N_1 - N_n(0)}{g\gamma_n} \geq 1. \label{eq:contr}
\end{align}
Let $N_n(s) := N_n(0) + g \gamma_n s$. 
By Taylor's theorem we have
\begin{align}
b &= \int_0^1 \hat \nu(N_n(0) + g \gamma_n s) ds \\
&= \hat \nu(N_n(0)) + O(\gamma_n),\label{eq:LN0}
\end{align}
so by \eqref{eq:L_ass}, $\nu_1 - \hat \nu(N_n(0)) = O(\gamma_n)$. 
Since $\nu_1 = \hat \nu(N_1)$ and the function $\hat \nu$ is smoothly invertible on 
$[N_0,N_1]$, we conclude 
$
|N_1 - N_n(0)| = O(\gamma_n), 
$
and thus $
|N_1 - N_n(s)| = O(\gamma_n)
$
uniformly in $s$. 

Again by Taylor's theorem 
\begin{align*}
b &= \int_0^1 \hat \nu(N_n(s)) ds \\
&= \int_0^1 \left [
\hat \nu(N_1) + \hat \nu_{N^-}(N_1)(N_n(s) - N_1) \right ] ds + O(\gamma_n^2) \\
&= \nu_1 + \hat \nu_{N^-}(N_1)\left ( N_n(0) - N_1 + \frac{1}{2}g\gamma_n \right ) + O(\gamma_n^2). 
\end{align*}
Thus, 
\begin{align*}
\frac{N_1 - N_n(0)}{g\gamma_n} = \frac{\nu_1 - b}{g \gamma \hat \nu_{N^-}(N_1)} + \frac{1}{2} + O(\gamma_n) \leq (1-\eps) \frac{1}{2} + \frac{1}{2} + O(\gamma_n) < 
1
\end{align*}
for all $n$ sufficiently large. This contradicts \eqref{eq:contr}, and thus, $[N_1 - N(0)]/g\gamma \in (0,1)$. 

Now suppose that $N(0) > 0$ and $s_0 := [N_1 - N(0)]/g\gamma \in (0,1)$. Then as above, we use Taylor's theorem and the fact $\hat \nu_{N^-}(N_1) \geq c > 0$ to expand 
\begin{align*}
b &= \int_0^1 \hat \nu (N(s)) ds \\
&= \int_0^{s_0}\hat \nu(N(s)) ds + \int_{s_0}^1 \nu_1 ds \\
&= \int_0^{s_0} \left [
\hat \nu(N_1) + \hat \nu_{N^-}(N_1)(1 + O(N(s)-N_1))(N(s) - N_1) \right ] ds \\
&\quad+ (1 - s_0)\nu_1 \\
&= s_0 \nu_1 - \frac{1}{2} g\gamma \hat \nu_{N^-}(N_1)(1 + O(\gamma)) s_0^2 + (1 - s_0) \nu_1 \\
&= \nu_1 - \frac{1}{2}g \gamma \hat \nu_{N^-}(N_1) (1 + O(\gamma))s_0^2. 
\end{align*}
Thus, 
\begin{align*}
\frac{\nu_1 - b}{\frac{1}{2}g\gamma \hat \nu_{N^-}(N_1) (1+O(\gamma))} = s_0^2 \in (0,1)
\end{align*}
proving \eqref{eq:L_assc}.
\end{proof}

\section{Stretch-Limited Catenaries}

\subsection{Formulation and preliminary results}
Suppose $a > 0, b \in \bbR$. We now consider the nondegenerate catenary problem 
\begin{align}
&\bs n(s) = \bs n(0) + F(s) \bs k, \\
&\bs r(0) = \bs 0, \quad \bs r(1) = a \bs i + b \bs k, \label{eq:catenary}
\end{align}
where, as before, $\bs n(s) = N(s)\frac{\bs r_s(s)}{|\bs r_s(s)|}$ is the contact force, the stretch $\nu(s) = |\bs r_s(s)|$ and tension $N(s)$ satisfy the constitutive relation discussed in Section 2.2,
and the magnitude of the total gravitational force on the material segment $[0,s]$ is 
\begin{align}
F(s) = \int_0^s g (\rho A)(\xi) d\xi. 
\end{align} 

It is well-known that the assumptions of the problem imply that the configuration is planar, $\bs r(s) \in \mbox{span} \{ \bs i, \bs k\}$, $\bs n(0) \cdot \bs i \neq 0$, and $\bs n$ is nowhere vanishing (see \cite{Antman79,AntmanBook}).

Writing 
\begin{align}
\frac{\bs r}{|\bs r|} &= \cos \theta \bs i + \sin \theta \bs k, \\
\bs n(0) &= \la \bs i + \mu \bs k,
\end{align} 
we obtain from \eqref{eq:catenary} the relations 
\begin{gather}
N \cos \theta = \la, \quad N \sin \theta = \mu + F, \\
N = \la \cos \theta + (\mu + F) \sin \theta, 
\end{gather}
and thus, if $\delta := \sqrt{\la^2 + (\mu + F)^2}$, 
\begin{gather}
\tan \theta = \frac{\mu + F}{\la},\\
N = \sign (\la) \delta, \quad \cos \theta = \frac{|\la|}{\delta}, \quad 
\sin \theta = \sign(\la) \frac{\mu + F}{\delta}. 
\end{gather}
With $a > 0$ and $b$ specified, the relation between $(a,b)$ and $(\la,\nu)$ is given by 
\begin{align}
a \bs i + b \bs k &= \int_0^1 \bs r_s(s) ds \\
&= \int_0^1 \nu(s) [ \cos \theta(s) \bs i + \sin \theta(s) \bs k] ds \\
&= \int_0^1 \frac{\hat \nu(\pm \delta(s),s)}{\pm \delta(s)}
[\la \bs i + (\mu + F(s)) \bs k ] ds \\
&= P^{\pm}(\la, \mu) \bs i + Q^{\pm}(\la, \mu) \bs k, \label{eq:PQeq}
\end{align}
where the $\pm$ corresponds to the sign of $\la$. 
This last relation can be written as 
\begin{align}
\nabla_{\la,\mu} \Phi^{\pm}(\la,\mu) = \bs 0, \label{eq:phiequ}
\end{align}
where 
\begin{align}
\Phi^{\pm}(\la, \mu) &:= \int_0^1 W^*(\pm \delta(s),s) ds - \la a - \mu b, \\
W^*(N,s) &:= \int_0^N \hat \nu(\bar N,s) d\bar N. 
\end{align}
We note that since the stretch $\nu(s) \leq \nu_1$ for all $s$, any solution to \eqref{eq:PQeq} (and thus \eqref{eq:phiequ}) must satisfy 
\begin{align*}
a^2 + b^2 < \nu_1^2.
\end{align*}

By a simple adaption of the proof from \cite{Antman79} using the variational form of the problem \eqref{eq:phiequ}, we have the following existence and uniqueness result for tensile states $(\la >  0)$ and existence result for compressive states $(\la < 0$). 

\begin{prop}
Suppose $a > 0$ and $a^2 + b^2 < \nu_1^2$. Then there exists unique $(\la^+,\mu^+)$ with $\la^+ > 0$ satisfying $\nabla \Phi^+(\la^+,\mu^+) = 0$. 

Suppose $a > 0$ and $a^2 + b^2 < \nu_0^2$. Then there exists $(\la^-,\mu^-)$ with $\la^- < 0$ satisfying $\nabla \Phi^-(\la^-,\mu^-) = 0$.

\end{prop}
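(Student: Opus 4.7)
The plan is to apply the direct method to the functionals $\Phi^{\pm}$, whose critical points $\nabla_{\la,\mu}\Phi^{\pm} = \bs 0$ are exactly the solutions of \eqref{eq:PQeq}. Since $\hat \nu$ is bounded and positive on $\bbR$, both $\Phi^{\pm}$ are continuous and $C^1$ on $\R^2$ (the only potential non-smoothness of $\delta(s) = \sqrt{\la^2+(\mu+F(s))^2}$ lies on the measure-zero set $\{\la=0,\ \mu = -F(s)\}$, which is absorbed by dominated convergence), and because the integrand depends on $\la$ only through $\delta$, one has the useful identities
\[
\Phi^{\pm}(\la,\mu) - \Phi^{\pm}(-\la,\mu) = -2\la a,\qquad \partial_\la \Phi^{\pm}\big|_{\la=0} = -a.
\]
With $a > 0$, these immediately force any interior extremizer to lie strictly off the $\la$-axis in whichever half-plane $\{\pm\la > 0\}$ the extremum is being sought.

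For the tensile case I would first note that $W^*(\cdot,s)$ is convex on $\bbR$ and nondecreasing on $[0,\infty)$ (since $\hat \nu$ is nondecreasing), so its composition with the convex $\delta(\cdot;s)$ makes $\Phi^+$ convex on $\R^2$. Using $\hat\nu \equiv \nu_1$ on $[N_1,\infty)$ and the ray expansion $\delta(s) = r + \beta F(s) + O(r^{-1})$ along $(\la,\mu) = r(\alpha,\beta)$ with $\alpha^2 + \beta^2 = 1$, one obtains
\[
\Phi^+(r\alpha, r\beta) = r\bigl(\nu_1 - \alpha a - \beta b\bigr) + O(1)\quad\text{as } r\to\infty,
\]
uniformly in direction. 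Since $\alpha a + \beta b \leq \sqrt{a^2 + b^2} < \nu_1$ by Cauchy--Schwarz and hypothesis, the leading coefficient is bounded below by a positive constant, so $\Phi^+$ is coercive and attains a global minimum $(\la^+,\mu^+)$, necessarily with $\la^+ > 0$ by the symmetry above. For uniqueness, if two distinct minimizers existed, convexity would make the entire connecting segment extremal, and the directional second derivative
\[
v^T \nabla_{\la,\mu}^2 W^*(\delta(s)) v \;=\; \hat\nu'(\delta)\,(v\cdot \nabla \delta)^2 + \hat\nu(\delta)\, v^T (\nabla^2 \delta)\, v
\]
would have to vanish pointwise in $s$. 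Since $\hat\nu(\delta) \geq \nu_0 > 0$ and the rank-one matrix $\nabla^2 \delta(s)$ has kernel $(\la,\mu + F(s))$, the direction $v$ would have to be parallel to $(\la,\mu+F(s))$ for every $s\in [0,1]$; the strict monotonicity of $F$ then forces $\la = 0$, contradicting $\la^+ > 0$.

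For the compressive case, the analogous ray expansion using $\hat\nu \equiv \nu_0$ on $(-\infty,N_0]$ yields
\[
\Phi^-(r\alpha, r\beta) = -r\bigl(\nu_0 + \alpha a + \beta b\bigr) + O(1),
\]
uniformly in direction, and under $a^2 + b^2 < \nu_0^2$ the leading coefficient is bounded above by $-(\nu_0 - \sqrt{a^2+b^2}) < 0$. Hence $\Phi^-(\la,\mu) \to -\infty$ as $|(\la,\mu)| \to \infty$, and by continuity $\Phi^-$ attains a global maximum $(\la^-,\mu^-)$, placed in the half-plane $\la < 0$ by the same symmetry and the non-criticality of $\la = 0$. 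The main obstacle in this program is the tensile uniqueness: the stretch-limiting hypothesis $\hat\nu' \equiv 0$ outside $[N_0,N_1]$ makes the Hessian of $\Phi^+$ only positive semi-definite in the fully stretched regime, so positive definiteness (and hence strict convexity) must be recovered entirely from the rank-one curvature of the norm $\delta(s)$ together with the $s$-variation of $F$, which is the delicate kernel-compatibility step above.
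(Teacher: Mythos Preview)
Your proposal is correct and follows precisely the route the paper indicates: the paper does not give its own proof but defers to a ``simple adaptation of the proof from \cite{Antman79} using the variational form of the problem,'' and your direct-method argument on $\Phi^{\pm}$ is exactly that adaptation. Your coercivity computations via the ray expansion and the affine behavior of $W^*$ outside $[N_0,N_1]$ are the natural replacements for the superlinear growth used in the classical case, and your uniqueness step---recovering strict convexity of $\Phi^+$ on $\{\la>0\}$ from the positive rank-one curvature $\hat\nu(\delta)\nabla^2\delta$ of the norm together with the $s$-variation of $F$---is precisely the extra work needed because $\hat\nu'$ vanishes in the fully stretched regime, which is the only point where the stretch-limited setting differs nontrivially from Antman's.
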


\subsection{Threshold for tensile states containing an inextensible segment}
For the remainder of the paper we assume that the catenary is uniform so that 
\begin{align}
(\rho A)(s) = \gamma > 0, \quad \nu(s) = \hat \nu(N(s)). 
\end{align}
Suppose that $b = 0$ (the supports of the catenary are at the same height) so the tensile catenary satisfies 
\begin{align}
a &= \int_0^1 \hat \nu(\delta(s)) \frac{\la}{\delta(s)} ds, \label{eq:aeq}\\
0 &= \int_0^1 \hat \nu(\delta(s)) \frac{\mu + g \gamma s}{\delta(s)} ds. \label{eq:beq}
\end{align} 
Then \eqref{eq:beq} implies that $\mu = -\frac{1}{2} g \gamma$ and \eqref{eq:aeq} becomes 
\begin{align}
a &= \int_0^1 \hat \nu(\delta(s)) \frac{\la}{\delta(s)} ds,
\quad \delta(s) = \sqrt{\la^2 + g^2 \gamma^2\Bigl (s - \frac{1}{2} 
	\Bigr )^2} \label{eq:a_equation}.
\end{align}

A tensile catenary is inextensible ($\nu = \nu_1$) if and only if for all $s \in [0,1]$, $\delta(s) \geq N_1$ which is equivalent to $\la \geq N_1$. 
A tensile catenary is a union of extensible $(\nu \in (1,\nu_1))$ and inextensible $(\nu = \nu_1)$ segments if and only if $\la < N_1$ and there exists $s \in (0,1)$ such that $\delta(s) = N_1$, which is equivalent to
\begin{align}
\frac{1}{(g \gamma)^2}(N_1^2 - \la^2) \in (0,1/4). \label{eq:inext_ext_condition}
\end{align}
The condition \eqref{eq:inext_ext_condition} is equivalent to $\delta(s) \geq N_1$ for all $s \in [0,s_-] \cup [s_+,1]$ and $0 < \delta(s) < N_1$ for all $s \in (s_-,s_+)$ where 
\begin{align}
s_{\pm} = \frac{1}{2} \pm \frac{1}{g \gamma}\sqrt{N_1^2 - \la^2} \in (0,1). 
\end{align}
Thus, a tensile catenary containing an inextensible segment is either completely inextensible or is a union of an extensible segment and two inextensible segments. 

In terms of the support at $a \bs i$, we now pose a similar question as in Section 3: 
\begin{itemize}
	\item  What values of $a$ result in an tensile state containing inextensible segments? 
\end{itemize}

We answer this question explicitly in the following two propositions. 

\begin{prop}\label{p:inext_cond}
For all $\gamma$, the tensile catenary is inextensible if and only if 
\begin{align}
\nu_1 \frac{2N_1}{g \gamma} \sinh^{-1} \frac{g \gamma}{2 N_1} \leq a < \nu_1. \label{eq:inext_condition}
\end{align}
\end{prop}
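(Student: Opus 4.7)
The plan is to exploit the fact that full inextensibility (i.e.\ $\nu(s)=\nu_1$ on all of $[0,1]$) is equivalent, under the formulas collected at the start of Section 4, to $\lambda\ge N_1$. Indeed $\delta(s)=\sqrt{\lambda^2+g^2\gamma^2(s-1/2)^2}$ attains its minimum $\lambda$ at $s=1/2$, and the threshold $\hat\nu(\cdot)=\nu_1$ is reached precisely when $\delta\ge N_1$ pointwise. So I would first reduce the problem to computing the map $\lambda\mapsto a(\lambda)$ on the range $\lambda\ge N_1$ and then showing it is a homeomorphism onto the half-open interval asserted in \eqref{eq:inext_condition}.

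Next I would substitute $\hat\nu(\delta(s))=\nu_1$ into \eqref{eq:a_equation} and evaluate the resulting elementary integral by the change of variables $u=g\gamma(s-1/2)$, which gives
\[
a \;=\; \nu_1\lambda\int_0^1\frac{ds}{\sqrt{\lambda^2+g^2\gamma^2(s-\tfrac12)^2}} \;=\;\frac{2\nu_1\lambda}{g\gamma}\sinh^{-1}\!\frac{g\gamma}{2\lambda} \;=:\; \psi(\lambda).
\]
Plugging in $\lambda=N_1$ yields exactly the lower bound in \eqref{eq:inext_condition}, so the whole task reduces to proving that $\psi\colon[N_1,\infty)\to\bbR$ is strictly increasing with $\psi(\lambda)\to\nu_1$ as $\lambda\to\infty$.

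For the monotonicity (which I expect to be the only step requiring a small trick, although a direct derivative computation also works), I would introduce $t=g\gamma/(2\lambda)$, rewrite $\psi(\lambda)=\nu_1\,\sinh^{-1}(t)/t$, and note the integral representation
\[
\frac{\sinh^{-1}(t)}{t}\;=\;\frac{1}{t}\int_0^{t}\frac{dv}{\sqrt{1+v^2}}.
\]
This is the average on $[0,t]$ of the strictly decreasing function $v\mapsto(1+v^2)^{-1/2}$, hence strictly decreasing in $t$; since $t$ is strictly decreasing in $\lambda$, $\psi$ is strictly increasing in $\lambda$. The asymptotic $\psi(\lambda)\to\nu_1$ follows from $\sinh^{-1}(t)/t\to 1$ as $t\to 0^+$.

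Putting these together: the image of $[N_1,\infty)$ under $\psi$ is the interval $[\psi(N_1),\nu_1)$, which is exactly the set described by \eqref{eq:inext_condition}. For the forward direction, if the tensile catenary is inextensible then $\lambda\ge N_1$ and the equation $a=\psi(\lambda)$ forces $a$ into this interval. For the converse, given $a$ in this interval, the strictly increasing surjection $\psi$ produces a unique $\lambda\ge N_1$ solving $\psi(\lambda)=a$, and the corresponding configuration is a tensile inextensible catenary; by the uniqueness part of the previous proposition, this must coincide with the unique tensile catenary for the given $a$, completing the proof.
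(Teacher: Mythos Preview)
Your proposal is correct and follows essentially the same route as the paper: compute the integral explicitly when $\hat\nu\equiv\nu_1$, obtain $a=\psi(\lambda)=\nu_1\frac{2\lambda}{g\gamma}\sinh^{-1}\frac{g\gamma}{2\lambda}$, use monotonicity of $\sinh^{-1}(t)/t$ to determine the range of $\psi$ on $[N_1,\infty)$, and invoke uniqueness of tensile states for the converse. The only cosmetic difference is that the paper asserts the monotonicity of $z\mapsto\sinh^{-1}(z)/z$ without proof, whereas you supply the average-of-a-decreasing-function argument; likewise the paper leaves the upper bound $a<\nu_1$ implicit (it follows from the standing assumption $a^2+b^2<\nu_1^2$ with $b=0$), while you derive it from the limit of $\psi$.
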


\begin{proof}
If the catenary is inextensible, then $\la \geq N_1$ and 
\begin{align}
a = \int_0^1 \nu_1 \frac{\la}{\delta(s)} ds =
\nu_1 \frac{2\la}{g \gamma} \sinh^{-1} \frac{g \gamma}{2 \la} \geq 
\nu_1 \frac{2N_1}{g \gamma} \sinh^{-1} \frac{g \gamma}{2 N_1} 
\end{align}	
since the function $z \mapsto \frac{\sinh^{-1} z}{z}$ is decreasing. 

Conversely, if \eqref{eq:inext_condition} holds, then the unique $\la \geq N_1$ solving 
$\frac{2\la}{g \gamma} \sinh^{-1} \frac{g \gamma}{2 \la} = \frac{a}{\nu_1} \in (0,1)$ satisfies the equation $a = \int_0^1 \nu_1 \frac{\la}{\delta(s)} ds$ by the previous computation. By the uniqueness result for tensile states, it follows that the catenary is inextensible. 
\end{proof}

\begin{prop}\label{p:ext_inext_condition}
Let $\eps \in (0,1)$. For all $\gamma$ sufficiently small (depending on $\eps$ and $\hat \nu$), the following is true. If 
\begin{align}
\nu_1 - \frac{(g \gamma)^2}{24 N_1^2}
(\nu_1 + (2+3\eps) \hat \nu_{N^-}(N_1) N_1)
 \leq a
< \nu_1 \frac{2N_1}{g \gamma} \sinh^{-1} \frac{g \gamma}{2 N_1}, \label{eq:position_ass}
\end{align}
then \eqref{eq:inext_ext_condition} holds i.e. the tensile catenary is a union of an extensible segment and two inextensible segments.  

Conversely, if the tensile catenary is composed of an extensible segment and two inextensible segments, then 
\begin{align}
\nu_1 - \frac{(g \gamma)^2}{24 N_1^2}
(\nu_1 + (2-3\eps) \hat \nu_{N^-}(N_1) N_1)
\leq a
< \nu_1 \frac{2N_1}{g \gamma} \sinh^{-1} \frac{g \gamma}{2 N_1}, \label{eq:position_ass_conv}.
\end{align}
\end{prop}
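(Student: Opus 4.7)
The plan is to mirror the approach of Proposition \ref{p:elong_threshold}: parametrize the mixed configurations, perform a Taylor expansion of \eqref{eq:a_equation} for small $\gamma$, and then argue by contradiction in the forward direction and by direct substitution in the converse. In the mixed case, write $\lambda^2 = N_1^2 - (g\gamma)^2 t^2$ with $t \in (0, 1/2)$, so that $s_\pm = 1/2 \pm t$, the inextensible region is $[0, s_-] \cup [s_+, 1]$, and the extensible region is $(s_-, s_+)$. Since $\lambda = N_1 + O(\gamma^2)$, I will use the uniform expansions $\delta(s) - N_1 = \tfrac{(g\gamma)^2((s-1/2)^2 - t^2)}{2 N_1} + O(\gamma^4)$ and $\lambda/\delta(s) = 1 - \tfrac{g^2\gamma^2(s-1/2)^2}{2 N_1^2} + O(\gamma^4)$, together with $\hat\nu(\delta) = \nu_1$ exactly on the inextensible portion and $\hat\nu(\delta) = \nu_1 + \hat\nu_{N^-}(N_1)(\delta - N_1) + O(\gamma^4)$ on the extensible portion, the left derivative reflecting the corner of $\hat\nu$ at $N_1$.

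Substituting into \eqref{eq:a_equation}, splitting the integral over the two regions, and using the elementary integrals $\int_0^1 (s - 1/2)^2 \, ds = 1/12$ and $\int_{-t}^{t} (u^2 - t^2) \, du = -4 t^3/3$ yields
\begin{align*}
a = \nu_1 - \frac{\nu_1 (g\gamma)^2}{24 N_1^2} - \frac{2 \hat\nu_{N^-}(N_1) (g\gamma)^2 t^3}{3 N_1} + O(\gamma^4),
\end{align*}
uniformly for $t$ in any compact subset of $[0, \infty)$; an analogous formula, with the extensible region taken to be all of $[0,1]$ and $t \geq 1/2$, covers fully extensible configurations with $\lambda$ near $N_1$. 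As consistency checks, $t = 0$ recovers, via $\sinh^{-1} z = z - z^3/6 + O(z^5)$, the Taylor expansion of $\nu_1 \tfrac{2 N_1}{g\gamma} \sinh^{-1} \tfrac{g\gamma}{2 N_1}$ from Proposition \ref{p:inext_cond}, and $t = 1/2$ recovers $\nu_1 - \tfrac{(g\gamma)^2}{24 N_1^2}(\nu_1 + 2 \hat\nu_{N^-}(N_1) N_1)$, which is precisely the leading-order form of the threshold appearing in both \eqref{eq:position_ass} and \eqref{eq:position_ass_conv}.

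For the forward direction, suppose $a$ satisfies \eqref{eq:position_ass}. The strict upper bound on $a$, combined with monotonicity of $\lambda \mapsto a(\lambda)$ and Proposition \ref{p:inext_cond}, rules out a fully inextensible catenary, so $\lambda < N_1$. To rule out a fully extensible catenary ($\lambda^2 \leq N_1^2 - (g\gamma)^2/4$), proceed by contradiction along a sequence $\gamma_n \to 0$, mirroring Proposition \ref{p:elong_threshold}: the lower bound gives $a_n = \nu_1 + O(\gamma_n^2)$, which together with the crude estimate $a(\lambda) \leq \nu_1 - c$ for any $\lambda$ bounded away from $N_1$ (a consequence of $\hat\nu(\delta) < \nu_1$ uniformly when $\delta$ stays below $N_1$) pins $\lambda_n$ within $O(\gamma_n^2)$ of $N_1$. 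The expansion is then valid globally, and plugging in $t_n \geq 1/2$ produces $a_n$ strictly less than the stated lower bound, a contradiction.

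For the converse, if the catenary is mixed then $t \in (0, 1/2)$, and since the expansion is strictly decreasing in $t^3$,
\[
a > \nu_1 - \frac{(g\gamma)^2}{24 N_1^2}\bigl(\nu_1 + 2 \hat\nu_{N^-}(N_1) N_1 \bigr) + O(\gamma^4),
\]
so that \eqref{eq:position_ass_conv} follows for $\gamma$ small enough that the $O(\gamma^4)$ error is absorbed by the $3\eps$-correction; the strict upper bound on $a$ is immediate from $\lambda < N_1$, monotonicity of $a(\lambda)$, and Proposition \ref{p:inext_cond}. The main technical obstacle will be the uniform control of the $O(\gamma^4)$ remainder in $t$---particularly because the jump of $\hat\nu_N$ at $N_1$ requires keeping the one-sided derivative $\hat\nu_{N^-}(N_1)$ and handling the two regions separately---and the pinning of $\lambda$ near $N_1$ in the forward direction's contradiction, which is the closest analogue of the sequence argument in Proposition \ref{p:elong_threshold}.
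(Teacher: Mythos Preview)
Your proposal is correct and follows essentially the same approach as the paper: a contradiction argument along $\gamma_n \to 0$ for the forward direction (rule out $\lambda \geq N_1$ via Proposition~\ref{p:inext_cond}, pin $\lambda_n$ near $N_1$, then Taylor-expand $a$ to contradict $t_n \geq 1/2$), and a direct Taylor expansion parametrized by $\tau = t \in (0,1/2)$ for the converse. The paper carries out the expansion using $\sinh^{-1}$ and then Taylor-expands those, whereas you compute the elementary polynomial integrals directly, but both arrive at the same key identity $a = \nu_1 - \tfrac{\nu_1 (g\gamma)^2}{24 N_1^2} - \tfrac{2 \hat\nu_{N^-}(N_1)(g\gamma)^2 t^3}{3 N_1} + O(\gamma^4)$; note only that your ``crude estimate'' gives $\lambda_n \to N_1$, and the $O(\gamma_n^2)$ rate then comes from the intermediate bootstrap $a = \hat\nu(\lambda_n) + O(\gamma_n^2)$ together with smooth invertibility of $\hat\nu$ on $[N_0,N_1]$, exactly as in the paper.
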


\begin{proof}
We prove the first part of the proposition by contradiction. Suppose that there exist $a_n > 0$, $\gamma_n > 0$ and $\la_n > 0$ such that $\gamma_n \rar 0$ and either 
\begin{align}
\forall n, \quad \la_n \geq N_1, \label{eq:in_ext_non}
\end{align}
or 
\begin{align}
\forall n, \quad \frac{1}{(g \gamma)^2}(N_1^2 - \la_n^2) \geq \frac{1}{4}
. \label{eq:inext_ext_condition_non}
\end{align}
By \eqref{eq:position_ass} and Proposition \ref{p:ext_inext_condition}, the possibility \eqref{eq:in_ext_non} is immediately ruled out. Assume that $\forall n$, $\la_n < N_1$ and \eqref{eq:inext_ext_condition_non} holds. We claim that $\la_n \rar N_1$. If not then there exists a subsequence $\{\la_{n_k} \}_k$ and $\delta_1 \in [0,N_1)$ such that $\la_{n_k} \rar \delta_1$. Then by \eqref{eq:position_ass}
\begin{align*}
\nu_1 = \lim_{k \rar \infty} a_{n_k} \leq \lim_{k \rar \infty} \int_0^1 \hat \nu(\delta_{n_k}(s)) ds = \hat \nu(\delta_1) < \nu_1,
\end{align*}
a contradiction. Thus, $\la_n \rar N_1$. In particular, for all $n$ sufficiently large, $\la_n \geq N_1/2 > 0$. Now 
\begin{align}
\delta_n(s) - \la_n = O(\gamma_n^2),
\end{align}
uniformly in $s$, 
so by Taylor's theorem 
\begin{align}
a &= \int_0^1 \hat \nu(\delta_n(s)) \frac{\la_n}{\delta_n(s)} ds \\
&= \hat \nu(\la_n) \frac{2 \la_n}{g \gamma_n} \sinh^{-1} \frac{g \gamma_n}{2 \la_n}
+ O(\gamma_n^2) \\
&= \hat \nu(\la_n) + O(\gamma_n^2). 
\end{align}
By \eqref{eq:position_ass} we conclude 
\begin{align}
|\nu_1 - \nu(\la_n)| = O(\gamma_n^2). 
\end{align}
Since $\hat \nu$ is smoothly invertible on $[N_0,N_1]$, we conclude $N_1 - \la_n = O(\gamma_n^2)$ and thus, 
\begin{align}
N_1 - \delta_n(s) = O(\gamma_n^2).  
\end{align}
We use Taylor's theorem again 
\begin{align}
a &= \int_0^1 \hat \nu(\delta_n(s)) \frac{\la_n}{\delta_n(s)} ds \\
&= \int_0^1 \left (
\nu_1 + \hat \nu_{N^-}(N_1)(\delta_n(s)-N_1) + O(\gamma_n^4)
\right ) \frac{\la_n}{\delta_n(s)} ds \\
&= \left (
\nu_1 - \hat \nu_{N^-}(N_1) N_1
\right ) \frac{2\la_n}{g \gamma_n} \sinh^{-1} \frac{g \gamma_n}{2 \la_n} + 
\hat \nu_{N^-}(N_1) \la_n + O(\gamma_n^4) \\
&= \nu_1 - \hat \nu_{N^-}(N_1) N_1 + \hat \nu_{N^-}(N_1) \la_n 
- \frac{1}{24 N_1^2} (\nu_1 - \hat \nu_{N^-}(N_1) N_1) + O(\gamma_n^4)
\end{align}
and thus
\begin{align}
\frac{N_1}{g \gamma_n} = \frac{\la_n}{g \gamma_n}
+ \frac{g \gamma_n}{\hat \nu_{N^-}(N_1)} \left (
\frac{\nu_1 - a}{(g \gamma_n)^2} - \frac{1}{24 N_1^2}(\nu_1 - \hat \nu_{N^-}(N_1) N_1)
\right ) + O(\gamma_n^3).
\end{align}
We conclude 
\begin{align}
\frac{1}{4} &\leq \frac{N_1^2 - \la_n^2}{(g \gamma_n)^2} \\
&= \frac{2 \la_n}{\hat \nu_{N^-}(N_1)} \left (
\frac{\nu_1 - a}{(g \gamma_n)^2} - \frac{1}{24 N_1^2}(\nu_1 - \hat \nu_{N^-}(N_1) N_1)
\right ) + O(\gamma_n^2) \\
&= \frac{2 N_1}{\hat \nu_{N^-}(N_1)} \left (
\frac{\nu_1 - a}{(g \gamma_n)^2} - \frac{1}{24 N_1^2}(\nu_1 - \hat \nu_{N^-}(N_1) N_1)
\right ) + O(\gamma_n^2) \\
&\leq (1 - \eps)\frac{1}{4} + O(\gamma_n^2) < \frac{1}{4},
\end{align}
for all $n$ sufficiently large. This contradiction shows \eqref{eq:inext_ext_condition_non} cannot hold and proves the first part of the proposition. 

Now assume that the tensile catenary is the union of an extensible segment and two inextensible segments: we have 
\begin{align}
\tau := \frac{1}{g\gamma} \sqrt{N_1^2 - \la^2} \in (0,1/2)
\end{align}
so the segments $[0,1/2 - \tau]$ and $[1/2+\tau,1]$ are inextensible, $\delta \geq N_1$, and the segment $(1/2-\tau,1/2+\tau)$ is extensible, $\delta \in (N_0,N_1)$.
Then $|N_1 - \la| \leq (g \gamma)^2/4N_1$ so $|\delta(s) - N_1| = O(\gamma^2)$ uniformly in $s \in [0,1]$. We then deduce the relation 
\begin{align}
\la = N_1 - \frac{(g \gamma)^2 \tau^2}{2 N_1} + O(\gamma^4). 
\end{align}

Applying Taylor's theorem we have 
\begin{align}
a &= \int_0^1 \hat \nu(\delta(s)) \frac{\la}{\delta(s)} ds \\
&= \int_0^{1/2-\tau} \nu_1 \frac{\la}{\delta(s)} ds + 
\int_{1/2-\tau}^{1/2+\tau} \hat \nu(\delta(s)) \frac{\la}{\delta(s)} ds + 
\int_{1/2+\tau}^1 \nu_1 \frac{\la}{\delta(s)} ds \\
&= \int_{1/2-\tau}^{1/2 + \tau} (\nu_1 + \hat \nu_{N^-}(N_1)(\delta(s) - N_1) + O(\gamma^4)) \frac{\la}{\delta(s)} ds \\
&\quad+ \frac{2\la \nu_1}{g \gamma} \sinh^{-1} \frac{g \gamma}{2\la} 
- \frac{2 \la \nu_1}{g \gamma} \sinh^{-1} \frac{g \gamma \tau}{\la} \\
&= (\nu_1 - \hat \nu_{N^-}(N_1)N_1) \frac{2 \la}{g \gamma} \sinh^{-1} \frac{g \gamma \tau}{\la} + 2 \tau \hat \nu_{N^-}(N_1) \la + O(\gamma^4) \\
&\quad+ \frac{2\la \nu_1}{g \gamma} \sinh^{-1} \frac{g \gamma}{2\la} 
- \frac{2 \la \nu_1}{g \gamma} \sinh^{-1} \frac{g \gamma \tau}{\la} \\
&= -\hat \nu_{N^-}(N_1)N_1\Bigl (2 \tau - \frac{(g \gamma)^2 \tau^3}{3\la^2} \Bigr ) + 2 \tau \hat \nu_{N^-}(N_1) \la + \nu_1 - \frac{\nu_1(g \gamma)^2}{24 \la^2} + O(\gamma^4) \\
&= -\hat \nu_{N^-}(N_1)N_1\Bigl (2 \tau - \frac{(g \gamma)^2 \tau^3}{3N_1^2} \Bigr ) + 2 \tau \hat \nu_{N^-}(N_1) \la + \nu_1 - \frac{\nu_1(g \gamma)^2}{24 N_1^2} + O(\gamma^4).
\end{align}
Thus, 
\begin{align}
\nu_1 - &\frac{(g \gamma)^2}{24 N_1^2} (\nu_1 -\hat \nu_{N^-}(N_1)N_1 ) \\
&= a + \tau \hat \nu_{N^-}(N_1) N_1 \Bigl (
2 - \frac{(g \gamma)^2 \tau^2}{3 N_1^2} - \frac{2 \la}{N_1}
\Bigr ) + \frac{(g\gamma)^2}{24 N_1} \hat \nu_{N^-}(N_1) + O(\gamma^4) \\
&= a + \tau \hat \nu_{N^-}(N_1) \frac{1}{3N_1} \Bigl (
6 N_1^2 - (g \gamma)^2 \tau^2- 6 \la N_1
\Bigr ) \\&\quad+ \frac{(g\gamma)^2}{24 N_1} \hat \nu_{N^-}(N_1)  + O(\gamma^4) \\
&= a + (g \gamma)^2 \frac{\hat \nu_{N^-}(N_1)}{N_1} \Bigl ( \frac{2 \tau^3}{3} + \frac{1}{24} \Bigr ) + O(\gamma^4) \\
&< a + (g \gamma)^2 \frac{\hat \nu_{N^-}(N_1)}{8 N_1} + O(\gamma^4) \\
&<  a + (g \gamma)^2 \frac{\hat \nu_{N^-}(N_1)}{8 N_1}(1 + \eps) 
\end{align}
for all $\gamma$ sufficiently small, since $\hat \nu_{N^-}(N_1) \geq c > 0$. This proves \eqref{eq:position_ass_conv} and concludes the proof of the proposition. 
\end{proof}

\subsection{Existence and multiplicity of compressive states}

We now consider the existence and multiplicity of compressive states. Since $\delta(s) \neq 0$ for all $s \in [0,1]$, compressive states satisfy 
$a^2 + b^2 < 1$. 
In \cite{Antman79}, Antman proved that if $\nu = \hat \nu(N)$ satisfies \eqref{eq:ext_const}, $a > 0$  and $a^2 + b^2 < 1$, then for all $\gamma$ sufficiently small, there are \emph{at least two} solution pairs $(\la,\mu)$ to
\begin{align}
a \bs i + b \bs k
= \int_0^1 \frac{\hat \nu(-\delta(s))}{-\delta(s)}
\bigl ( \la \bs i + (\mu + g\gamma s) \bs k \bigr ) ds. \label{eq:ext_problem}
\end{align}
In \cite{Wolfe97}, Wolfe proved that of these multiple states, one state is a perturbation of a unit massive inextensible catenary, and another state is a perturbation of a straight, mass-less catenary. A simple adaptation of \cite{Antman79} yields the following result in our stretch-limited setting.
 
\begin{prop}\label{p:slightly_ext}\mbox{}
Suppose that $a > 0$ and $\nu_0^2 < a^2 + b^2 < 1$. Then for all $\gamma$ sufficiently small, there exist at least two solution pairs to \eqref{eq:ext_problem}. 
\end{prop}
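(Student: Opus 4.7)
The plan is to adapt the argument of \cite{Antman79} (in the interpretation of \cite{Wolfe97}) to produce two branches of critical points of $\Phi^-$ distinguished by their $\gamma \to 0^+$ behavior: one branch perturbs from a straight mass-less compressed state, the other from a compressive unit-mass inextensible catenary. Both branches are produced by the implicit function theorem (IFT) applied at different $\gamma=0$ limits.

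\emph{Straight mass-less branch.} Setting $\gamma = 0$ directly reduces $\delta(s)$ to the constant $\rho = \sqrt{\la^2 + \mu^2}$, and \eqref{eq:ext_problem} becomes $(a,b) = (\hat\nu(-\rho)/\rho)(-\la,-\mu)$, whose unique solution is $(\la_0,\mu_0) = -(\rho_0/\sqrt{a^2+b^2})(a,b)$, where $\rho_0 > 0$ is determined by $\hat\nu(-\rho_0) = \sqrt{a^2+b^2}$. The hypothesis $\nu_0^2 < a^2+b^2 < 1$ places $-\rho_0 \in (N_0, 0)$, a smooth point of $\hat\nu$ with $\hat\nu_N(-\rho_0) \ge c > 0$. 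A direct Hessian computation yields
\[
\det D^2_{(\la,\mu)} \Phi^-(\la_0, \mu_0; 0) = -\frac{\hat\nu_N(-\rho_0)\,\hat\nu(-\rho_0)}{\rho_0} \neq 0,
\]
so the IFT produces a smooth branch $(\la_\gamma^{(1)}, \mu_\gamma^{(1)}) \to (\la_0, \mu_0) \neq (0,0)$ of critical points of $\Phi^-$ for all sufficiently small $\gamma > 0$.

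\emph{Rescaled inextensible branch.} Under the rescaling $(\tilde\la, \tilde\mu) := (\la,\mu)/\gamma$, one has $\delta = \gamma\tilde\delta$ with $\tilde\delta(s) = \sqrt{\tilde\la^2 + (\tilde\mu + gs)^2}$, and \eqref{eq:ext_problem} becomes
\[
a = -\int_0^1 \hat\nu(-\gamma\tilde\delta(s))\,\frac{\tilde\la}{\tilde\delta(s)}\,ds, \qquad b = -\int_0^1 \hat\nu(-\gamma\tilde\delta(s))\,\frac{\tilde\mu + gs}{\tilde\delta(s)}\,ds.
\]
Since $\hat\nu(0)=1$, the $\gamma = 0$ limit is the compressive unit-mass inextensible catenary problem, whose associated functional $\tilde\Phi(\tilde\la, \tilde\mu) := -\int_0^1 \tilde\delta(s)\,ds - \tilde\la a - \tilde\mu b$ is strictly concave (since the triangle inequality and $g > 0$ force $\int\tilde\delta\,ds$ to be strictly convex) and coercive on $\bbR^2$ under the assumption $a^2 + b^2 < 1$, yielding a unique critical point $(\tilde\la_0, \tilde\mu_0)$; further, the first-order relation $\tilde\la_0 \int_0^1 \tilde\delta_0^{-1}\,ds = -a$ forces $\tilde\la_0 < 0$. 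With $X(s) := \tilde\la_0/\tilde\delta_0(s)^{3/2}$ and $Y(s) := (\tilde\mu_0 + gs)/\tilde\delta_0(s)^{3/2}$, the Jacobian of the limiting system at $(\tilde\la_0, \tilde\mu_0)$ reads
\[
\begin{pmatrix}
\int_0^1 Y^2\,ds & -\int_0^1 XY\,ds \\
-\int_0^1 XY\,ds & \int_0^1 X^2\,ds
\end{pmatrix},
\]
with determinant $\int X^2\,ds\,\int Y^2\,ds - (\int XY\,ds)^2 > 0$ strictly positive by Cauchy--Schwarz, since $X, Y$ are linearly independent (the ratio $(\tilde\mu_0 + gs)/\tilde\la_0$ is non-constant in $s$ for $g > 0$). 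The IFT then gives a branch $(\tilde\la_\gamma, \tilde\mu_\gamma) \to (\tilde\la_0, \tilde\mu_0)$, producing the second solution $(\la_\gamma^{(2)}, \mu_\gamma^{(2)}) = \gamma(\tilde\la_\gamma, \tilde\mu_\gamma) \to (0,0)$ of \eqref{eq:ext_problem}.

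Distinctness of the two branches is immediate from their different $\gamma\to 0^+$ limits. The main obstacle is establishing the unique solvability of the limiting inextensible catenary problem in the second branch, which rests on the strict concavity and coercivity argument above. A secondary point is to verify that the stretch-limiting saturation of $\hat\nu$ at $N_0, N_1$ plays no role for either branch: the first branch has tension confined near $-\rho_0 \in (N_0, 0)$ and the second has tension of order $\gamma$ near zero, so both remain in the smooth interior $(N_0, N_1)$ of the constitutive law where the Hessian computations are valid.
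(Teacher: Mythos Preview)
Your argument is correct and is essentially the approach the paper has in mind: the paper does not prove this proposition explicitly but states it follows from a simple adaptation of \cite{Antman79}, and immediately before the statement cites \cite{Wolfe97} for the identification of the two branches as perturbations of a straight mass-less catenary and of a unit massive inextensible catenary, which is exactly the pair of IFT applications you carry out. Your verification that both branches have tension lying strictly in $(N_0,0)$ for small $\gamma$ (so the stretch-limiting thresholds are never reached and the classical analysis applies unchanged) is the only point where the ``adaptation'' is needed, and you handle it correctly.
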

	
The question we turn to for the remainder of the section is: 
\begin{itemize}
 \item What is the multiplicity of solutions to \eqref{eq:ext_problem} when the distance between supports is less than the minimal length of the string, i.e. $a^2 + b^2 < \nu_0^2$? Moreover, do these solutions contain inextensible segments where $\nu = \nu_0$?  
\end{itemize}
We prove that if $a > 0$ and $a^2 + b^2 < \nu_0^2$, then a solution pair $(\la,\mu)$ to \eqref{eq:ext_problem} is unique, and the catenary is completely extensible, as long as the mass density is small. 

We recall that if $a > 0$ and $a^2 + b^2 < 1$, then the problem for a uniform compressive inextensible catenary with $(\rho A)(s) = \gamma_0 > 0$, $\nu = 1$,
\begin{align}
a \bs i + b \bs k
= \int_0^1 \frac{1}{-\delta(s)}
\bigl ( \la \bs i + (\mu + g\gamma_0 s) \bs k \bigr ) ds, \label{eq:inext}
\end{align}
can be solved explicitly using hyperbolic functions (see \cite{Antman79}). In particular, $(\la, \mu)$ solving \eqref{eq:inext} are uniquely determined by the relations: 
\begin{align}
\frac{\sqrt{1 - b^2}}{a} &= \frac{2 |\lambda|}{a g \gamma_0} \sinh \frac{a g \gamma_0}{2 |\la|}, 
\label{eq:lambda_equation} \\
\mu &= \lambda \sinh \left (
\frac{a g \gamma_0}{2 |\lambda|} + \tanh^{-1} b
\right ) \label{eq:mu_equation}. 
\end{align}
We note that \eqref{eq:lambda_equation} uniquely determines $\lambda$ since the function $z \mapsto \frac{\sinh z}{z}$ is invertible on $(0,\infty)$ with range $(1,\infty)$ and $\frac{\sqrt{1 - b^2}}{a} > 1$.

\begin{prop}\label{p:uniqueness}
Suppose $a > 0$ and $a^2 + b^2 < \nu_0^2$.  Then for all $\gamma$ sufficiently small, there exist unique $\lambda = \hat \lambda(\gamma) < 0$ and $\mu = \hat \mu(\gamma) \in \bbR$ satisfying \eqref{eq:ext_problem}. 

Moreover, if $(\la_0, \mu_0)$ is the unique solution to the inextensible problem \eqref{eq:inext} with $\gamma_0 = 1$, then
\begin{align}
\hat \lambda(\gamma) = \gamma \lambda_0 + O(\gamma^2), \quad 
\hat \mu(\gamma) = \gamma \mu_0 + O(\gamma^2). \label{eq:lamlam0}
\end{align} 
In particular, for all $\gamma$ sufficiently small, the catenary is completely extensible, $\hat \nu(s) \in (\nu_0,1)$ for all $s \in [0,1]$.
\end{prop}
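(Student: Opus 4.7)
The plan is to recast equation \eqref{eq:ext_problem} as a regular perturbation of the unit-mass inextensible catenary problem and apply the implicit function theorem. Introduce rescaled unknowns $\tilde\lambda := \lambda/\gamma$, $\tilde\mu := \mu/\gamma$, so that $\delta(s) = \gamma\tilde\delta(s)$ with $\tilde\delta(s) := \sqrt{\tilde\lambda^2 + (\tilde\mu+gs)^2}$. Cancelling a common factor of $\gamma$ on both sides, equation \eqref{eq:ext_problem} becomes $G(\tilde\lambda,\tilde\mu,\gamma) = \bs 0$, where
\begin{align*}
G(\tilde\lambda,\tilde\mu,\gamma) := \pmat{a - \int_0^1 \hat\nu(-\gamma\tilde\delta(s))\dfrac{\tilde\lambda}{-\tilde\delta(s)}\,ds \\ b - \int_0^1 \hat\nu(-\gamma\tilde\delta(s))\dfrac{\tilde\mu+gs}{-\tilde\delta(s)}\,ds}.
\end{align*}
Because $\hat\nu$ is smooth on $[N_0,N_1]$ and $\tilde\delta(s) \geq |\tilde\lambda|$ stays uniformly positive near any $\tilde\lambda < 0$, the map $G$ extends smoothly across $\gamma = 0$, where it becomes exactly the inextensible catenary equation \eqref{eq:inext} with $\gamma_0 = 1$. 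Since $a > 0$ and $a^2 + b^2 < \nu_0^2 < 1$, the explicit formulas \eqref{eq:lambda_equation}--\eqref{eq:mu_equation} produce a unique solution $(\lambda_0,\mu_0)$ with $\lambda_0 < 0$ to $G(\cdot,\cdot,0) = \bs 0$.

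I would next apply the implicit function theorem at $(\lambda_0,\mu_0,0)$. The Jacobian $D_{(\tilde\lambda,\tilde\mu)}G(\lambda_0,\mu_0,0)$ is (up to sign) the Hessian of the strictly convex functional $(\tilde\lambda,\tilde\mu) \mapsto \int_0^1 \tilde\delta(s)\,ds - \tilde\lambda a - \tilde\mu b$ whose Euler--Lagrange equation is the $\gamma = 0$ problem. For each fixed $s$ the Hessian of $\tilde\delta(s)$ is the rank-one projection onto the line orthogonal to $(\tilde\lambda,\tilde\mu+gs)$, and because the gravitational shift $gs$ rotates this perpendicular direction as $s$ varies, the integrated Hessian is strictly positive definite, hence invertible. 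IFT then produces a unique $C^1$ branch $\gamma \mapsto (\hat{\tilde\lambda}(\gamma),\hat{\tilde\mu}(\gamma))$ in a fixed neighborhood of $(\lambda_0,\mu_0)$ with $(\hat{\tilde\lambda}(0),\hat{\tilde\mu}(0)) = (\lambda_0,\mu_0)$; undoing the rescaling gives the expansion \eqref{eq:lamlam0}.

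The main obstacle is upgrading this local uniqueness to global uniqueness over all compressive solutions with $\lambda < 0$. I would argue by contradiction: suppose $\gamma_n \to 0$ and $(\lambda_n,\mu_n)$ is a compressive solution lying outside the IFT branch. First show $(\lambda_n,\mu_n) \to \bs 0$ by dichotomy. If along a subsequence $(\lambda_n,\mu_n) \to (\lambda_*,\mu_*) \ne \bs 0$ with finite norm, then $\delta_n(s) \to \sqrt{\lambda_*^2+\mu_*^2}$ uniformly in $s$ and dominated convergence in \eqref{eq:ext_problem} yields $a^2 + b^2 = \hat\nu\bigl(-\sqrt{\lambda_*^2+\mu_*^2}\bigr)^2 \geq \nu_0^2$, contradicting the hypothesis. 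If instead $|(\lambda_n,\mu_n)| \to \infty$, then $\delta_n(s) \to \infty$ uniformly so $\hat\nu(-\delta_n(s)) \equiv \nu_0$ eventually, and the same integral identity in polar coordinates forces $a^2 + b^2 \to \nu_0^2$, again a contradiction. Hence $(\lambda_n,\mu_n) \to \bs 0$. Next, an analogous dichotomy applied to the rescaled sequence $(\tilde\lambda_n,\tilde\mu_n) = (\lambda_n/\gamma_n,\mu_n/\gamma_n)$ — this time exploiting the uniqueness of $(\lambda_0,\mu_0)$ for $G(\cdot,\cdot,0) = \bs 0$ — shows $(\tilde\lambda_n,\tilde\mu_n)$ is bounded and each subsequential limit equals $(\lambda_0,\mu_0)$. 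For $n$ large, $(\tilde\lambda_n,\tilde\mu_n)$ then lies inside the IFT uniqueness neighborhood, the desired contradiction. Finally, once $\lambda,\mu = O(\gamma)$ uniformly, $\delta(s) = O(\gamma)$ and so $-\delta(s) \in (N_0,0)$ for all $\gamma$ sufficiently small; the strict monotonicity $\hat\nu_N \geq c > 0$ on $[N_0,N_1]$ then places $\hat\nu(-\delta(s))$ strictly in $(\nu_0,1)$, establishing complete extensibility.
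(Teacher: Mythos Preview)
Your strategy is essentially the paper's: rescale to $(\tilde\lambda,\tilde\mu)=(\lambda/\gamma,\mu/\gamma)$, apply the implicit function theorem at $\gamma=0$ (this is the paper's Lemma~\ref{l:slightly_ext}, quoted from Wolfe), and then upgrade local to global uniqueness by showing that every solution sequence with $\gamma_n\to0$ eventually lands in the IFT neighborhood. Your first dichotomy, forcing $(\lambda_n,\mu_n)\to\bs 0$, is exactly Lemma~\ref{l:lamu_to0}; in the unbounded case your limiting-direction argument yielding $a^2+b^2=\nu_0^2$ is in fact a bit quicker than the paper's, which instead invokes the explicit inextensible formulas~\eqref{eq:lambda_eq}--\eqref{eq:mu_eq} to read off $(\lambda_n,\mu_n)=O(\gamma_n)$.

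The one place the paper's execution differs is the second convergence step. Rather than a second compactness dichotomy on $(\tilde\lambda_n,\tilde\mu_n)$, the paper adds and subtracts the inextensible kernel to rewrite \eqref{eq:ext_problem} as an \emph{exact} unit-mass inextensible problem with perturbed endpoints $(a_n,b_n)\to(a,b)$, and then reads off $(\tilde\lambda_n,\tilde\mu_n)\to(\lambda_0,\mu_0)$ directly from the closed formulas~\eqref{eq:lambda_equation}--\eqref{eq:mu_equation}. This bypasses a boundary case your sketch leaves implicit: a bounded subsequence with $\tilde\lambda_n\to0$, where $G(\cdot,\cdot,0)$ becomes singular and the uniqueness of $(\lambda_0,\mu_0)$ cannot be invoked directly. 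That case can be closed (one shows the $a$-equation forces $a\to0$), but the paper's formula-based route avoids the issue without extra work.
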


\begin{lem}\label{l:lamu_to0}
Suppose $a > 0$ and $a^2 + b^2 < \nu_0^2$. Let $\gamma_n > 0$ and $(\la_n,\mu_n)$ satisfy \eqref{eq:ext_problem} with $\gamma_n \rar 0$ as $n \rar \infty$. Then 
\begin{align}
\lim_{n \rar \infty} \sqrt{\la_n^2 + \mu_n^2} = 0. 
\end{align}
\end{lem}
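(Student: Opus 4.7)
The plan is to argue by contradiction. Suppose the conclusion fails; after passing to a subsequence, there exists $\eps > 0$ such that $D_n := \sqrt{\la_n^2 + \mu_n^2} \geq \eps$ for all $n$. The guiding intuition is that as $\gamma_n \to 0$ the gravity vanishes and the catenary degenerates toward a straight configuration whose length must be at least $\nu_0$ whenever $(\la_n, \mu_n)$ stays away from the origin; this contradicts the strict inequality $\sqrt{a^2 + b^2} < \nu_0$. I split the argument according to whether $\{D_n\}$ is bounded or not.

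In the bounded case, I extract a further subsequence with $(\la_n, \mu_n) \to (\la_*, \mu_*)$, where $D_* := \sqrt{\la_*^2 + \mu_*^2} \geq \eps > 0$ and $\la_* \leq 0$. Because $D_* > 0$ and $\gamma_n \to 0$, the functions $\delta_n(s) = \sqrt{\la_n^2 + (\mu_n + g \gamma_n s)^2}$ converge uniformly on $[0,1]$ to the constant $D_*$, and continuity of $\hat \nu$ gives uniform convergence of the integrand in \eqref{eq:ext_problem}. Passing to the limit yields
\[
a \bs i + b \bs k = \frac{\hat \nu(-D_*)}{-D_*}\bigl(\la_* \bs i + \mu_* \bs k\bigr),
\]
so taking magnitudes produces $\sqrt{a^2 + b^2} = \hat \nu(-D_*) \geq \nu_0$, contradicting the hypothesis $a^2 + b^2 < \nu_0^2$.

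In the unbounded case I pass to a subsequence with $D_n \to \infty$; since also $\gamma_n \to 0$, one has $\delta_n(s) \to \infty$ uniformly in $s$, so that for all $n$ sufficiently large $-\delta_n(s) \leq N_0$ and the stretch-limiting hypothesis collapses $\hat \nu(-\delta_n(s)) \equiv \nu_0$. Normalizing by setting $(u_n, v_n) := (\la_n/D_n, \mu_n/D_n)$, I extract a further subsequence with $(u_n, v_n) \to (u_*, v_*)$ where $u_*^2 + v_*^2 = 1$ and $u_* \leq 0$. The unit vector $\bigl(\la_n \bs i + (\mu_n + g\gamma_n s)\bs k\bigr)/\delta_n(s)$ then converges to $u_*\bs i + v_* \bs k$ uniformly in $s$, so passage to the limit in \eqref{eq:ext_problem} gives $a \bs i + b \bs k = -\nu_0 (u_* \bs i + v_* \bs k)$ and hence $\sqrt{a^2 + b^2} = \nu_0$, once again contradicting the hypothesis.

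The proof is largely a compactness exercise; the only genuine point to watch is the case split, since the integrand in \eqref{eq:ext_problem} behaves very differently when $(\la_n,\mu_n)$ remains bounded (where $\hat \nu$ is evaluated at a finite limit) versus when it blows up (where $\hat \nu$ saturates at $\nu_0$). In both regimes, however, the magnitude of the limiting right-hand side is pinned at a value $\geq \nu_0$, which is precisely the obstruction ruled out by $a^2 + b^2 < \nu_0^2$.
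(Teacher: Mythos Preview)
Your proof is correct and follows the same overall contradiction-plus-case-split structure as the paper. The bounded case is handled identically: pass to a convergent subsequence, use uniform convergence of $\delta_n(s)$ to a positive constant, and read off $\sqrt{a^2+b^2} = \hat\nu(-D_*) \geq \nu_0$.

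The unbounded case is where you diverge. The paper observes that once $-\delta_n(s) < N_0$ everywhere the saturated equation is exactly the inextensible catenary problem (with stretch $\nu_0$), and then invokes the explicit hyperbolic formulas \eqref{eq:lambda_equation}--\eqref{eq:mu_equation} to pin down $\la_n, \mu_n = O(\gamma_n)$, contradicting $D_n \to \infty$. You instead normalize by $D_n$, extract a limiting unit vector $(u_*,v_*)$, and pass to the limit in the integral directly to obtain $\sqrt{a^2+b^2} = \nu_0$. Your route is more self-contained since it avoids the explicit catenary solution; the paper's route recovers the precise scale of $(\la_n,\mu_n)$, but that extra information is not needed for the lemma.
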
 

\begin{proof}
Suppose not. Then for all $n \in \bbN$, there exist $\gamma_n > 0$,  $\lambda_n < 0$ and $\mu_n \in \bbR$ satisfying \eqref{eq:ext_problem} such that $\gamma_n \rar 0$ and 
\begin{align}
\sqrt{\la_n^2 + \mu_n^2} \rar \delta_0 \in (0,\infty]. \label{eq:bad_limit}
\end{align}
We consider the two cases $\delta_0 = \infty$ and $\delta_0 \in (0,\infty)$ separately. 

If $\delta_0 = \infty$, then for all $n$ sufficiently large, for all $s \in [0,1]$
\begin{align}
-\sqrt{\lambda_n^2 + (\mu_n+g\gamma_n s)^2} < N_0. \label{eq:inext_ass}
\end{align} Then \eqref{eq:ext_problem} implies 
\begin{align*}
\frac{a}{\nu_0} \bs i + \frac{b}{\nu_0} \bs k
&= \int_0^1 -\frac{1}{\sqrt{\la_n^2 + (\mu_n + g\gamma_n s)^2}}
\bigl (\la_n \bs i + (\mu_n + g \gamma_n s) \bs k \bigr ) ds,
\end{align*}
By \eqref{eq:lambda_equation} and \eqref{eq:mu_equation}, $\la_n$ is determined by the relation 
\begin{align}
\frac{\sqrt{\nu_0^2 - b^2}}{a} = \frac{2 \nu_0 |\la_n|}{a g \gamma_n} \sinh \frac{a g \gamma_n}{2 \nu_0 |\la_n|}, \label{eq:lambda_eq}
\end{align}
and $\mu_n$ is determined by 
\begin{align}
\mu_n = \la_n \sinh \Bigl (
\frac{a g \gamma_n}{2 \nu_0 |\la_n|} + \tanh^{-1} \frac{b}{\nu_0}
\Bigr ). \label{eq:mu_eq}
\end{align}
Let $z \in (0,\infty)$ be the unique solution to $\frac{\sqrt{\nu_0^2 - b^2}}{a} = \frac{\sinh z}{z}$, which exists since $\sqrt{\nu_0^2 - b^2} > a$. Then by \eqref{eq:lambda_eq}  
\begin{align}
\frac{ag \gamma_n}{2 \nu_0 |\la_n|} = z
\end{align}
whence $\la_n = O(\gamma_n)$. By \eqref{eq:mu_eq} it follows that 
\begin{align}
\mu_n = \la_n\sinh \Bigl (z + \tanh^{-1}\frac{b}{\nu_0} \Bigr )
\end{align}
whence $\mu_n = O(\gamma_n)$. Thus, 
\begin{align*}
\sqrt{\lambda_n^2 + \mu_n^2} \rar 0 \mbox{ as } n \rar \infty
\end{align*}
 which contradicts \eqref{eq:inext_ass}.
 
We now consider the case 
\begin{align}
\sqrt{\lambda_n^2 + \mu_n^2} \rar \delta_0 \in (0,\infty). \label{eq:slightlyext_ass}
\end{align}
Passing to a subsequence and relabeling if necessary, we can assume that there exists $(\la_*, \mu_*)$ such that 
\begin{align}
\lim_{n \rar \infty} (\la_n,\mu_n) = (\la_*,\mu_*). 
\end{align}
By \eqref{eq:slightlyext_ass}, we conclude
\begin{align}
\sqrt{\la_*^2 + \mu_*^2} = \delta_0 > 0, 
\end{align}
and therefore, for all $s \in [0,1]$
\begin{align*}
\lim_{n \rar \infty} -\sqrt{\la_n^2 + (\mu_n + g \gamma_n s)^2} = 
-\sqrt{\la_*^2 + \mu_*^2} = -\delta_0,  
\end{align*}
By the continuity of $\hat \nu$, the dominated convergence theorem and \eqref{eq:ext_problem} we conclude 
\begin{align*}
a \bs i + b \bs k
&= \lim_{n \rar \infty} \int_0^1 \frac{\hat \nu(-\sqrt{\la_n^2 + (\mu_n + g \gamma_n s)^2})}{-\sqrt{\la_n^2 + (\mu_n + g\gamma_n s)^2}}
\left [\la_n \bs i + (\mu_n + g\gamma_n s) \bs k \right ] ds \\
&= - \hat \nu(-\delta_0) \left [
\frac{\la_*}{\sqrt{\la_*^2 + \mu_*^2}} \bs i + 
\frac{\mu_*}{\sqrt{\la_*^2 + \mu_*^2}} \bs k
\right ]
\end{align*}
and thus 
\begin{align}
a^2 + b^2 = \hat \nu(-\delta_0)^2 \geq \nu_0^2, 
\end{align}
a contradiction to the assumption $a^2 + b^2 < \nu_0^2$. Thus, \eqref{eq:bad_limit} cannot hold which proves the lemma. 
\end{proof}

Via rescaling the variables $(\la,\mu)$ and the implicit function theorem, we have the following result from \cite{Wolfe97}. 

\begin{lem}\label{l:slightly_ext}
	Suppose $a > 0$ and $a^2 + b^2 < 1$. Let $(\la_0, \mu_0)$ be the unique solution to the inextensible problem \eqref{eq:inext} with $\gamma_0 = 1$.  There exists $\eps > 0$ and $\zeta > 0$ such that if $\gamma \in (0,\epsilon)$, then there exists a unique pair 
	\begin{align*}
	(\hat \la(\gamma), \hat \mu(\gamma)) \in \left \{ \left |\frac{\lambda}{\gamma} - \lambda_0 \right |^2 + \left | \frac{\mu}{\gamma} - \mu_0 \right |^2 < \zeta^2  \right \},
	\end{align*}
	satisfying \eqref{eq:ext_problem}. Moreover, 
	\begin{align}
	\hat \lambda(\gamma) = \gamma \lambda_0 + O(\gamma^2), \quad  
	\hat \mu(\gamma) = \gamma \mu_0 + O(\gamma^2). 
	\end{align}
\end{lem}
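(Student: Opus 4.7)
The strategy is to rescale the unknowns so that \eqref{eq:ext_problem} acquires a regular limit as $\gamma \to 0^+$ and then invoke the implicit function theorem at that limit. Introduce the new unknowns $\bar \la = \la/\gamma,\ \bar \mu = \mu/\gamma$ and write $\bar \delta(s;\bar \la,\bar \mu) := \sqrt{\bar \la^2 + (\bar \mu + gs)^2}$, so that $\delta(s) = \gamma \bar \delta(s)$. Equation \eqref{eq:ext_problem} is then equivalent to $F(\bar \la, \bar \mu, \gamma) = \bs 0$, where
\begin{align*}
F(\bar \la, \bar \mu, \gamma) := \pmat{a + \int_0^1 \dfrac{\hat \nu(-\gamma \bar \delta(s))}{\bar \delta(s)} \bar \la \, ds \\[4pt] b + \int_0^1 \dfrac{\hat \nu(-\gamma \bar \delta(s))}{\bar \delta(s)} (\bar \mu + gs) \, ds}.
\end{align*}
On a small compact neighborhood $U$ of $(\la_0, \mu_0)$ I choose $\eps_0 > 0$ so that $-\gamma \bar \delta(s) \in [N_0, 0]$ for every $(\bar \la, \bar \mu, \gamma, s) \in U \times [0,\eps_0] \times [0,1]$; the smoothness of $\hat \nu$ on $[N_0, 0]$ then makes $F$ of class $C^1$ on $U \times [0, \eps_0]$.

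At $\gamma = 0$, the identity $\hat \nu(0) = 1$ reduces $F(\bar \la, \bar \mu, 0) = \bs 0$ to the inextensible problem \eqref{eq:inext} with $\gamma_0 = 1$, whose unique solution is $(\la_0, \mu_0)$; thus $F(\la_0, \mu_0, 0) = \bs 0$. The crux of the argument is to show that the Jacobian $J := D_{(\bar \la, \bar \mu)} F(\la_0, \mu_0, 0)$ is invertible. Direct differentiation under the integral yields
\begin{align*}
J = \int_0^1 \frac{1}{\bar \delta_0(s)^3}\, v(s) v(s)^\top\, ds, \qquad v(s) := \pmat{\mu_0 + gs \\ -\la_0},
\end{align*}
with $\bar \delta_0(s) := \bar \delta(s; \la_0, \mu_0)$. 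As a positively weighted sum of rank-one symmetric matrices, $J$ is positive semidefinite, and it is in fact positive definite provided $\{v(s): s \in [0,1]\}$ spans $\bbR^2$. Since $v(s_1) - v(s_2) = (g(s_1 - s_2), 0)^\top$, this holds exactly when $\la_0 \neq 0$. This is the main obstacle, and it is overcome by reading off the first component of the inextensible equation, $a = -\int_0^1 \la_0/\bar \delta_0(s)\, ds$, whose right side has the sign of $-\la_0$; since $a > 0$, necessarily $\la_0 < 0$.

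With $J$ invertible, the implicit function theorem produces $\eps \in (0, \eps_0)$ and $\zeta > 0$ along with a $C^1$ curve $\gamma \mapsto (\tilde \la(\gamma), \tilde \mu(\gamma))$ on $[0,\eps)$ valued in $B_\zeta(\la_0, \mu_0)$ satisfying $F(\tilde \la, \tilde \mu, \gamma) = \bs 0$, with $(\tilde \la(0), \tilde \mu(0)) = (\la_0, \mu_0)$ and unique among solutions in this ball. Setting $\hat \la(\gamma) := \gamma \tilde \la(\gamma)$ and $\hat \mu(\gamma) := \gamma \tilde \mu(\gamma)$ recovers the desired solution of \eqref{eq:ext_problem}. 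The Taylor expansions $\tilde \la(\gamma) = \la_0 + O(\gamma)$ and $\tilde \mu(\gamma) = \mu_0 + O(\gamma)$ then yield the claimed asymptotics $\hat \la(\gamma) = \gamma \la_0 + O(\gamma^2)$, $\hat \mu(\gamma) = \gamma \mu_0 + O(\gamma^2)$, and uniqueness on $B_\zeta(\la_0, \mu_0)$ transfers to the rescaled ball $\{|\la/\gamma - \la_0|^2 + |\mu/\gamma - \mu_0|^2 < \zeta^2\}$ stated in the lemma.
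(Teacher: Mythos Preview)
Your proposal is correct and follows exactly the approach the paper indicates: the paper does not give a detailed proof of this lemma but simply writes ``Via rescaling the variables $(\la,\mu)$ and the implicit function theorem, we have the following result from \cite{Wolfe97}.'' You have carried out precisely this rescaling and implicit function theorem argument, and your Jacobian computation and the verification that $\la_0 \neq 0$ (hence $J$ is invertible) are correct.
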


\begin{proof}[Proof of Proposition \ref{p:uniqueness}]
Let $(\lambda, \mu)$ satisfy \eqref{eq:ext_problem} (we now drop the dependence in $\gamma$). We claim that 
\begin{align}
\lim_{\gamma \rar 0} \frac{\lambda}{\gamma} = \lambda_0, \quad
\lim_{\gamma \rar 0} \frac{\mu}{\gamma} = \mu_0. \label{eq:lammu_convergence}
\end{align}
Then \eqref{eq:lammu_convergence} and Lemma \ref{l:slightly_ext} immediately imply the conclusions Proposition \ref{p:uniqueness}. 

To prove \eqref{eq:lammu_convergence}, let $\gamma_n > 0$ and $(\lambda_n, \mu_n)$ satisfy \eqref{eq:ext_problem}, and suppose that $\gamma_n \rar 0$. We wish to prove that 
\begin{align}
\frac{\la_n}{\gamma_n} \rar \la_0, \quad \frac{\mu_n}{\gamma_n} \rar \mu_0. \label{eq:lanmun_convergence}
\end{align} 
Define 
\begin{align*}
\delta_n(s) &:= \sqrt{\la_n^2 + (\mu_n + g\gamma_n s)^2}, \\
a_n &:= a + \int_0^1 (\hat \nu(-\delta_n(s)) - 1) \frac{\la_n}{\delta_n(s)} ds, \\
b_n &:= b + \int_0^1 (\hat \nu(-\delta_n(s)) - 1) \frac{\mu_n + g \gamma_n s}{\delta_n(s)} ds, \\
\la_{0,n} &:= \frac{\la_n}{\gamma_n}, \quad \mu_{0,n} := \frac{\mu_n}{\gamma_n}, \\
\delta_{0,n}(s) &:= \sqrt{\la_{0,n}^2 + (\mu_{0,n} + g s)^2}. 
\end{align*}
By Lemma \ref{l:lamu_to0}, $(\la_n, \mu_n) \rar (0,0)$ which implies $\delta_n(s) \rar 0$ uniformly on $[0,1]$. Since $\hat \nu(\cdot)$ is continuous, we conclude 
\begin{align}
\lim_{n \rar \infty} a_n = a, \quad \lim_{n \rar \infty} b_n = b. \label{eq:anbn}
\end{align}
From the above definitions and \eqref{eq:ext_problem}, we have 
\begin{align}
a_n \bs i + b_n \bs k = \int_0^1 -\frac{1}{\delta_{0,n}(s)} \bigl (
\la_{0,n} \bs i + (\mu_{0,n} + g s) \bs k
\bigr ),
\end{align}
and thus, by \eqref{eq:lambda_equation} and \eqref{eq:mu_equation}, $(\la_{0,n}, \mu_{0,n})$ are uniquely determined by the relations 
\begin{align}
\frac{\sqrt{1 - b_n^2}}{a_n} &= \frac{2|\lambda_{0,n}|}{a_n g} \sinh \frac{a_n g}{2 |\la_{0,n}|}, \label{eq:lan_equation}\\
\mu_{0,n} &= \lambda_{0,n}\sinh \left (
\frac{a g}{2 |\lambda_{0,n}|} + \tanh^{-1} b_n
\right ). \label{eq:mun_equation}
\end{align}
Since $a_n \rar a$ and $b_n \rar b$, \eqref{eq:lan_equation} implies $\la_{0,n} \rar \la_0$ i.e. $\frac{\la_n}{\gamma_n} \rar \la_0$. By \eqref{eq:mun_equation}, it then follows that $\mu_{0,n} \rar \mu_0$ i.e. $\frac{\mu_n}{\gamma_n} \rar \mu_0$. This concludes the proof of \eqref{eq:lanmun_convergence} and Proposition \ref{p:uniqueness}.

\end{proof} 

\section{Conclusion} 

This paper considers stationary strings suspended between two supports under the force of gravity (catenaries) satisfying a new stretch-limiting constitutive relation. We explicitly classify the positions of the supports leading to tensile states containing fully stretched, inextensible segments in two cases: the degenerate case when the string is vertical and straight, and the nondegenerate case when the supports are at the same height. We then turn to the question of multiplicity of compressive states in general and prove uniqueness of compressive states when the distance between supports is less than the minimal length of the string. This work should be viewed as an early first step in exploring the mathematical properties of implicit constitutive relations within the realm of one-dimensional elastic bodies including strings and, more generally, rods. In particular, we have considered the simplest stationary setting of stretch-limited elastic strings, leaving the study of dynamical motion for future work. 

\bibliographystyle{plain}
\bibliography{researchbibmech}
\bigskip

\centerline{\scshape Casey Rodriguez}
\smallskip
{\footnotesize
 \centerline{Department of Mathematics, Massachusetts Institute of Technology}
\centerline{77 Massachusetts Ave, 2-246B, Cambridge, MA 02139, U.S.A.}
\centerline{\email{caseyrod@mit.edu}}
}

\end{document}